\numberwithin{equation}{section}
\newtheorem{thm}{Theorem}[section]
\newtheorem{prop}[thm]{Proposition}
\newtheorem{cor}[thm]{Corollary}
{\theorembodyfont{\rmfamily}

\newtheorem{rmk}[thm]{Remark}}
\newcommand{\qed}{\hfill \mbox{\raggedright \rule{.07in}{.1in}}}
\newenvironment{proof}{\vspace{1ex}\noindent{\bf
Proof}\hspace{0.5em}}{\hfill\qed\vspace{1ex}}
\newenvironment{pfof}[1]{\vspace{1ex}\noindent{\bf Proof of
#1}\hspace{0.5em}}{\hfill\qed\vspace{1ex}}
\newcommand{\R}{{\mathbb R}}
\newcommand{\C}{{\mathbb C}}
\newcommand{\Z}{{\mathbb Z}}
\newcommand{\D}{{\mathbb D}}
\newcommand{\T}{{\mathbb T}}
 \renewcommand{\Re}{\operatorname{Re}}
\newcommand{\OO}{{\bf O}}
 \newcommand{\E}{{\bf E}}
 \newcommand{\eps}{{\epsilon}}
 \newcommand{\Fix}{\operatorname{Fix}}
\newcommand{\cL}{{\mathcal L}}
\newcommand{\cP}{{\mathcal P}}
\title{Quasicrystals in pattern formation, Part I: Local existence and basic properties} 
\author{
Ian Melbourne \thanks{Mathematics Institute,
University of Warwick,
Coventry, CV4 7AL,
UK}, 
Jens D. M. Rademacher \thanks{
Department of Mathematics,
Universit\"at Hamburg,
20146 Hamburg, Germany}, 
Bob Rink \thanks{
Department of Mathematics,
Vrije Universiteit Amsterdam, 
De Boelelaan 1111, 1081 HV Amsterdam, The Netherlands},
Sergey Zelik \thanks{Department of Mathematics, ZJNU, Jinhua,  China,\\ and Department of Mathematics, University of Surrey, Guildford, UK\\ and Keldysh institute of applied mathematics, Moscow, Russia\\ and HSE University, Nizgnij Novgorod, Russia}.
}
\date{21 October 2024. Updated 1 February 2025}
\begin{document}

\maketitle

\begin{flushright}
{\it Dedicated to the fond memory of Claudia Wulff} \hspace*{3em}
\end{flushright}

 \begin{abstract} 
In this paper, we propose a general mechanism for the existence of quasicrystals in spatially extended systems (partial differential equations with Euclidean symmetry).
We argue that the existence of quasicrystals with higher order rotational symmetry, icosahedral symmetry, etc., is a natural and universal consequence 
of spontaneous symmetry breaking, 
bypassing technical issues such as Diophantine properties and hard implicit function theorems.

The diffraction diagrams associated with these quasicrystal solutions are not Delone sets, so strictly speaking they do not conform to the definition of a ``mathematical quasicrystal''.
But they do appear to capture very well the features of the diffraction diagrams of quasicrystals observed in nature.

For the Swift-Hohenberg equation, we obtain more detailed information, including that the $\ell^2$ norm of the diffraction diagram grows like the square root of the bifurcation parameter.
 \end{abstract}

 \section{Introduction}

The existence of quasicrystals~\cite{SBGC} was first reported in 1984 (see Figure~\ref{fig:experiments}(a)), and such aperiodic order has been the source of much interest ever since. Subsequently, there have been several instances of quasicrystal solutions in fluid experiments. For example,
quasipatterns with eightfold symmetry~\cite{CAL} and
twelvefold symmetry~\cite{EF} were observed in the Faraday wave experiment (see Figure~\ref{fig:experiments}(b)) and quasipatterns with twelvefold symmetry were observed in shaken convection~\cite{VM}.

\begin{figure}
\begin{center}
\begin{tabular}{cc}
\includegraphics[width=0.4\textwidth]{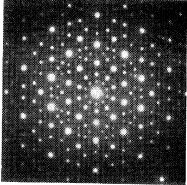} &
\hspace{1em}
\includegraphics[width=0.4\textwidth]{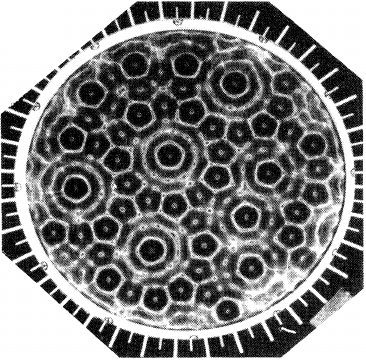} \\
(a) & \hspace{2em} (b)
\end{tabular}
\end{center}
\caption{Experimental evidence of quasicrystal structures.  \newline
(a) Electron diffraction diagram of a metallic solid. See Shechtman \emph{et al.}~\cite{SBGC} for details.
Reprinted Fig.~2 with permission from~\cite{SBGC}.
Copyright (1984) by the American Physical Society. \newline
(b) Photograph of a Faraday wave experiment. See Edwards \& Fauve~\cite{EF} for details.
Reprinted Fig.~3 with permission from \cite{EF}.
Copyright (1993) by the American Physical Society.
}
\label{fig:experiments}
\end{figure}

It is well-known that in systems of partial differential equations (PDEs) with Euclidean $\E(d)$ symmetry, the variation of a parameter generically gives rise to a large variety of spatially periodic solutions~\cite{BCM,CrawKnob91,CrossHoh,DG,GSS88,GS02,Michel80,Satt79}.
This mechanism is known as \emph{spontaneous symmetry breaking}.
It turns out that such bifurcations automatically give rise also to a large class of quasicrystals.

The quasicrystals in this paper are functions 
\begin{align} \label{eq:qc}
u:\R^d\to\R^s, \qquad
u(x)=\sum_{k\in\cL^*}a_ke^{ik\cdot x},
\end{align}
 where $\cL^*$ is a finitely-generated subgroup of $\R^d$
and the amplitudes 
$a_k\in\R^s$ satisfy $\sum_{k\in\cL^*} |a_k|<\infty$.
To such functions $u$, we associate the 
\emph{diffraction diagram}
\[
I_u=\{(k,a_k):a_k\neq0\} \subset\cL^*\times (\R^s\setminus\{0\}).
\]
Also, for each $\eps>0$ we define the \emph{cut-off set}
\[
\Lambda_{u,\eps}=\{k\in\cL^*:|a_k|>\eps\}\subset \cL^*.
\]

\paragraph{Comparison with mathematical quasicrystals}
Mathematical treatments of quasicrystals~\cite{Baake02,Lagarias96,Meyer72,Meyer95,Moody95,Senechal} start from the consideration of a countable set $\Lambda\subset\R^d$ that is \emph{uniformly discrete} ($\inf_{k,\ell\in\Lambda,\;k\neq\ell}|k-\ell|>0$). Usually it is assumed in addition that the set is \emph{relatively dense} (there exists $R\in(0,\infty)$ such that $\bigcup_{x\in\Lambda}B(x,R)=\R^d$).
A set that is uniformly discrete and relatively dense 
is called a \emph{Delone set}~\cite{Delone32}.

For ease of comparison, we add the assumption that the vectors in $\cL^*$ span $\R^d$, but this assumption is not required elsewhere in the paper.
The quasilattice $\cL^*$ is uniformly discrete if and only if it is a lattice, in which case the functions $u$ are spatially periodic and correspond to crystals.
Hence we focus on the case when $\cL^*$ is not uniformly discrete.

The sets $\cL^*$ and $\Lambda_{u,\eps}$ are not Delone sets ($\cL^*$ is relatively dense but not uniformly discrete; $\Lambda_{u,\eps}$ is finite so trivially uniformly discrete  but not relatively dense). 
Nevertheless, the solutions studied in this paper have the desirable properties that
\begin{itemize}
\item[(i)]
There is a finitely-generated relatively dense  subgroup $\cL^*\subset\R^d$ such that
$u(x)=\sum_{k\in\cL^*}a_ke^{ik\cdot x}$ where
the amplitudes
$a_k\in\R^s$ satisfy $\sum_k |a_k|<\infty$;
\item[(ii)] The subgroup of $\cL^*$ generated by $\Lambda_{u,\eps}$ is not uniformly discrete for some $\eps>0$.
\end{itemize}
We conjecture that our solutions typically satisfy the following condition which strengthens condition (ii):
\begin{itemize}
\item[(iii)] For any $M,r>0$ there exists $\eps>0$ such that
$\Lambda_{u,\eps}$ is $r$-dense in $\cL^* \cap B(0,M)$.
\end{itemize}
Consequently, the diffraction diagram $I_u$ ``looks'' like a Delone set to any desired resolution.

Condition~(i) is more restrictive than mathematical definitions of quasicrystals referred to above, and conditions~(ii) and (iii) are more relaxed. They seem to us to better describe the features of naturally arising quasicrystals observed so far, see Figure~\ref{fig:experiments}(a).
For the purposes of this paper, we will say that a function $u:\R^d\to\R^s$ is a quasicrystal if conditions~(i) and~(ii) are satisfied, leaving condition~(iii) for future work.

\begin{rmk} It is possible to relax the assumption that 
	$\sum_{k\in\cL^*} |a_k|<\infty$ and consider classes of almost periodic functions. This is done in the second part~\cite{PartII} of this paper which is from now on referred to as Part~II.
However, the stronger assumption fits naturally into the general framework for bifurcation with Euclidean symmetry set out in~\cite{M99}, see Section~\ref{sec:local}.
\end{rmk}

\paragraph{Existence of quasicrystals via spontaneous symmetry breaking}

Recall that the Euclidean group $\E(d)$ can be regarded as a semidirect product of 
$\OO(d)$ (rotations and reflections) and $\R^d$ (translations).
Let $H$ be a finite subgroup of $\OO(d)$ and fix a unit vector
$k_0\in\R^d$. Define $\cL^*_H$ to be the quasilattice generated by
the vectors $\gamma k_0$, $\gamma\in H$. We require further that $H$ is the largest subgroup of $\OO(d)$ that preserves $\cL^*_H$ (in particular, $-I\in H$).
Such a subgroup $H$ is called a \emph{holohedry}.
We then restrict to functions $u$ that are $H$-invariant, so
the amplitudes $a_k$ satisfy $a_{\gamma k}=a_k$ for all $\gamma\in H$, $k\in\cL^*_H$ (Proposition~\ref{prop:H}).

As explained in Section~\ref{sec:qc}, solutions of this type are guaranteed via spontaneous symmetry breaking. In cases where the holohedry $H$ violates the \emph{crystallographic restriction} (e.g.\ $q$-fold rotations with $q\ge 8$ in $\R^2$, or icosahedral symmetry in $\R^3$) these solutions are quasicrystals satisfying conditions (i) and (ii) above.

Unlike for spatially periodic solutions, we have limited information on the structure of the quasicrystal solutions that we obtain. For example, we do not analyse their dynamics or their stability, and we do not prove existence of smooth branches of solutions. More precise analyses are pursued in~\cite{Iooss13,Iooss14,Iooss17,Iooss19,IoossRucklidge10,RucklidgeRucklidge03}. (See also~\cite{MNT} for nonrigorous results.) However, our aim here is to prove general results without requiring hard implicit function theorems that have to be modified for different equations or in higher dimensions. On the other hand, our results are strengthened when there is information regarding global existence of solutions and estimates of global attractors.

\begin{rmk}
In traditional approaches to spontaneous symmetry breaking and equivariant bifurcation theory~\cite{CrawKnob91,GSS88,Michel80,Ruelle73,Satt79,Vand82}, there is the concept of \emph{isotropy subgroup} $H$ of the full group of symmetries $\Gamma$.  
The phase space $V$ of the underlying equations is $\Gamma$-invariant, and the 
\emph{fixed-point subspace} $\Fix H$ is defined to be the subspace 
 $\Fix H=\{v\in V:hv=v\text{ for all }h\in H\}$. 
Such a fixed-point subspace is
automatically a flow-invariant subspace for the underlying equations. Restricting to $\Fix H$ results in an evolution equation whose solutions have at least the symmetries in the isotropy subgroup $H$.

The quasicrystals discussed here have isotropy subgroup $H\subset\E(d)$, but $\Fix H$ contains many solutions that are not spatially quasiperiodic.
For instance, a reasonable choice of $V$ is the space of continuous bounded functions. 
If $d=2$ and $H=\D_8$ say, then there are plenty of octagonally-symmetric functions in $V$ that have no approximate translation symmetry.\footnote{For spatially-periodic square solutions, say, this issue does not arise since we can take the isotropy subgroup to include both $\D_4$ and the appropriate subgroup $\Z^2$ of the translations.}
Even if the $\E(2)$-invariant phase space $V$ is restricted \emph{a priori} to quasiperiodic functions (which is artificial since there are plenty of other interesting solutions such as fronts and spirals that occur via spontaneous symmetry breaking), $\Fix H$ will contain solutions that do not satisfy condition~(i).

Even though there do not exist isotropy subgroups corresponding to quasicrystal structure, it turns out that Euclidean-equivariance always implies the existence of flow-invariant subspaces, that are not fixed point subspaces, consisting entirely of quasicrystal solutions and spatially constant solutions (Corollary~\ref{cor:equivariant}). 
Moreover, it is easy to arrange instability of the spatially constant solutions, and hence the only possible stable dynamics within these subspaces consists of quasicrystal solutions.
It is in this sense that quasicrystals are universal in systems with Euclidean symmetry.
\end{rmk}

The remainder of Part~I of this paper is organised as follows.
In Section~\ref{sec:qc}, we describe the flow-invariant subspaces of quasicrystal solutions that arise naturally and universally in Euclidean-equivariant systems of PDEs via spontaneous symmetry breaking.
In Section~\ref{sec:SH}, we analyse in detail a simplified model for pattern formation, namely the Swift-Hohenberg equation~\cite{SwiftHoh}.
In Section~\ref{sec:Bruss}, we briefly consider reaction-diffusion equations; specifically the Brusselator.

\section{Flow-invariant subspaces of quasicrystals}
\label{sec:qc}

In this section, we define spaces of planar quasicrystals (Subsection~\ref{sec:qcplanar}) and higher-dimensional quasicrystals (Subsection~\ref{sec:qchigher}).
In Subsection~\ref{sec:local}, we discuss local existence and uniqueness results for Euclidean-equivariant PDEs. In particular, we  show that the spaces of quasicrystals are flow-invariant and hence give rise to quasicrystals via spontaneous symmetry breaking.
In Subsection~\ref{sec:global}, we briefly discuss global existence results, referring to Part~II for details.

\subsection{Planar quasicrystals with holohedry $\D_q$}
\label{sec:qcplanar}

For simplicity, we focus first on the case of Euclidean equivariant PDEs in the plane. The phase space of such PDEs consists of functions $u:\R^2\times\Omega\to\R^s$ where $\Omega$ is bounded\footnote{For example, in the case of planar Rayleigh-B\'enard convection (governed by the Boussinesq equations), $\Omega=[0,1]$.}. By~\cite{M99}, the analysis of bifurcations with nonzero critical wavenumber generically reduces to the case where $\Omega=\{0\}$, so we consider functions $u:\R^2\to\R^s$. 
(By~\cite{M99}, we could also assume $s=1$, but there is no gain here in doing so and the reduction to $s=1$ fails in higher dimensions.)
We suppose in addition that the action of the Euclidean group $\E(2)$ is {\em scalar}, so 
$(\gamma\cdot u)(x)=u(\gamma^{-1}x)$ for all $\gamma\in\E(2)$.
(There is also the \emph{pseudoscalar} action~\cite{BCM,GolubitskyShiauTorok03} where reflections act as 
$(\gamma\cdot u)(x)=-u(\gamma^{-1}x)$, see Remark~\ref{rmk:anti} below.
By~\cite{M99}, generically these are the only two possibilities.)

Let $q\ge2$ be an even integer and
let $\D_q\subset\OO(2)$ denote the dihedral group of order $2q$ generated by rotations of order $q$ and a reflection.
We say that $u$ is $\D_q$-invariant if $\gamma\cdot u=u$ for all $\gamma\in\D_q$. Since we assume the scalar action, this means that $u(\gamma x)=u(x)$ for all
$\gamma\in \D_q$, $x\in\R^2$.

Fix a vector $k_0\in\R^2$ with $|k_0|=1$ and
 let $\cL^*_{\D_q}$ be the subgroup of $\R^2$ generated by $\{\gamma k_0:\gamma\in\D_q\}$.
Abusing notation slightly, we let $\ell^1(\cL^*_{\D_q})$ consist of $\D_q$-invariant functions $u:\R^2\to\R^s$ of the form
\[
u(x)=\sum_{k\in\cL^*_{\D_q}}a_ke^{ik\cdot x}, \quad a_k\in\R^s,
\]
such that $\sum_{k\in\cL^*_{\D_q}}|a_k|<\infty$.
By Proposition~\ref{prop:H} below, $\D_q$-invariance means that
$a_{\gamma k}=a_k$ for all $k\in\cL^*_{\D_q}$,
$\gamma\in\D_q$.

\begin{rmk} Usually, one considers complex amplitudes $a_k\in\C^s$ satisfying the reality condition
$a_{-k}=\overline{a_k}$. However,
since $q$ is even, $-I\in\D_q$ and hence $\overline{a_k}=a_{-k}=a_k$, so automatically $a_k\in\R^s$ for all $k$. 
\end{rmk}

We say that $u$ is \emph{spatially constant} if $u(x)\equiv c$ for some $c\in\R^s$.
Such solutions are $\E(2)$-invariant and are contained in $\ell^1(\cL^*_{\D_q})$.
For $q\le 6$, all elements of $\ell^1(\cL^*_{\D_q})$ have nontrivial translation invariance.  
However, for $q\ge8$, non-spatially constant elements of $\ell^1(\cL^*_{\D_q})$ automatically satisfy conditions (i) and (ii) from the Introduction and hence are quasicrystals with holohedry $\D_q$. Moreover, such functions typically satisfy also condition~(iii).

Given a function $u\in\ell^1(\cL^*_{\D_q})$, we define the norm $\|u\|_1=\sum_{k\in\cL^*_{\D_q}}|a_k|$.
We define also the (incomplete) $\ell^2$ norm
$\|u\|_2=\big(\sum_{k\in\cL^*_{\D_q}}a_k^2\big)^{1/2}$ which is finite for
$u\in \ell^1(\cL^*_{\D_q})$.

Let $C_b(\R^2)$ denote the Banach space of continuous bounded functions $u:\R^2\to\R^s$ with the supnorm $\|u\|_\infty=\sup_{x\in\R^2}|u(x)|$.
We have the embedding $\ell^1(\cL^*_{\D_q})\subset C_b(\R^2)$ for each $q$, with
$\|u\|_\infty\le \|u\|_1$.\footnote{Throughout Part I, $\|\;\|_1$ and $\|\;\|_2$ denote the $\ell^1$ and $\ell^2$ norms on sequences, and
$\|\;\|_\infty$ denotes the supremum norm on functions.}

\begin{rmk}[Superquasicrystals]
If $k_0$ does not lie on a reflection axis, 
then $\{\gamma k_0:\gamma\in\D_q\}$ consists of $2q$ elements.
For each $q\ge8$, the quasicrystals with holohedry $\D_q$  lie on a four-dimensional family of solutions.
Three of these dimensions are due to the Euclidean symmetry. Modulo symmetry, there is a one-dimensional family parametrised by the smallest angle between $k_0$ and its images under reflection.

When $q\le 6$, the lattices where $k_0$ is not fixed by a reflection in $\D_q$ are often referred to as ``superlattices'', leading to ``supersquares'' and ``superhexagons''~\cite{AF,DG}.
As far as we know, the corresponding ``superquasicrystal'' solutions have not been observed experimentally, but they automatically exist in systems with Euclidean symmetry by the arguments presented here.
\end{rmk}

\begin{rmk}[Anti-quasicrystals]
\label{rmk:anti}
As mentioned above, there are also pseudoscalar actions where reflections acts as $(\gamma\cdot u)(x)=-u(\gamma^{-1}x)$.
In such systems, we obtain ``anti-quasicrystals'' (or ``pseudoscalar quasicrystals'').
To do this, choose $k_0$ not fixed by a reflection, and consider the space of functions $u(x)=\sum_{k\in\cL^*}a_ke^{ik\cdot x}$ with $a_{\gamma k}=a_k$ when $\gamma$ is a rotation and $a_{\gamma k}=-a_k$ when $\gamma$ is a reflection.

For analogous pseudoscalar spatially periodic solutions (e.g.\ anti-squares), see~\cite{BCM,GolubitskyShiauTorok03}.

The symmetry group of the planar Swift-Hohenberg equation (considered in Section~\ref{sec:SH}) is $\E(2)\times\Z_2$ (Euclidean transformations in the plane and $u\mapsto -u$).
For $q\ge8$, the maximal subgroup of $\E(2)\times\Z_2$ preserving $\ell^1(\cL^*_{\D_q})$ 
is $\D_q\times\Z_2$. Although our  focus is on quasicrystals with holohedry $\D_q$,
we simultaneously obtain anti-quasicrystals with holohedry $\D_q^-$ consisting of (1) $q$-fold rotations and (2) reflections composed with $u\mapsto-u$.
\end{rmk}

\subsection{Higher-dimensional quasicrystals}
\label{sec:qchigher}

The structures described in Subsection~\ref{sec:qcplanar} easily extend to
PDEs with Euclidean symmetry $\E(d)$ in general dimension $d$.
Again we restrict to phase spaces of functions
$u:\R^d\to\R^s$ (this is no loss of generality by~\cite{M99}) and to \emph{scalar} actions of $\E(d)$ where $(\gamma\cdot u)(x)=u(\gamma^{-1}x)$ for $\gamma\in\E(d)$, $x\in\R^d$.

Let $H$ be a finite subgroup of $\OO(d)$ and fix a unit vector $k_0\in\R^d$.
Define
$\cL^*_H$ to be the subgroup of $\R^d$ generated by $\{\gamma k_0:\gamma\in H\}$.
We require that $H$ is the 
maximal subgroup of $\OO(d)$ preserving $\cL^*_H$ so $H$ is a \emph{holohedry}.

Abusing notation slightly, we let $\ell^1(\cL^*_H)$ be the space of functions $u:\R^d\to\R^s$ of $H$-invariant functions of the form
\[
u(x)=\sum_{k\in\cL^*_H}a_ke^{ik\cdot x}, \quad a_k\in\R^s,
\]
such that 
$\|u\|_1=\sum_{k\in\cL^*_H}|a_k|<\infty$.
Since the action of $\E(d)$ is assumed to be scalar, $H$-invariance means that $u(\gamma x)=u(x)$ for all $\gamma \in H$, $x\in\R^d$.

\begin{prop} \label{prop:H}
Let $u\in \ell^1(\cL^*_H)$ as above. Then $u$ is $H$-invariant if and only if 
$a_{\gamma k}=a_k$ for all $\gamma\in H$, $k\in\cL^*_H$.
\end{prop}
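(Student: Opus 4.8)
The plan is to recover the amplitudes $a_k$ from $u$ as generalized Fourier (Bohr) coefficients, using a mean-value functional, and then to read off how these coefficients transform under $x\mapsto\gamma x$. For a function $f$ (scalar-, $\R^s$- or $\C^s$-valued) define, componentwise,
\[
M[f]=\lim_{T\to\infty}\frac{1}{(2T)^d}\int_{[-T,T]^d}f(x)\,dx,
\]
whenever this limit exists. First I would establish the orthogonality relation $M[e^{im\cdot x}]=\delta_{m,0}$ for $m\in\R^d$: for $m=0$ the integrand is $1$, while for $m\neq0$ the average factorises as $\prod_{j=1}^d \frac{\sin(m_jT)}{m_jT}$ (with the $j$-th factor read as $1$ when $m_j=0$), and since some $m_j\neq0$ the product tends to $0$ as $T\to\infty$. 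Because $\sum_k|a_k|<\infty$, the series for $u$ converges uniformly, which justifies interchanging $M$ with the sum; hence for each $\ell\in\cL^*_H$,
\[
M\left[u(x)\,e^{-i\ell\cdot x}\right]=\sum_{k\in\cL^*_H}a_k\,M\left[e^{i(k-\ell)\cdot x}\right]=a_\ell .
\]
Thus the amplitudes are uniquely determined by the function $u$.

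Next I would compute the transformation law. Since $\gamma\in H\subset\OO(d)$ is orthogonal we have $k\cdot\gamma x=(\gamma^{-1}k)\cdot x$, and since $\cL^*_H$ is generated by $\{\gamma k_0:\gamma\in H\}$, each $\gamma\in H$ permutes $\cL^*_H$. Reindexing by $\ell=\gamma^{-1}k$ therefore gives
\[
u(\gamma x)=\sum_{k\in\cL^*_H}a_k\,e^{i(\gamma^{-1}k)\cdot x}=\sum_{\ell\in\cL^*_H}a_{\gamma\ell}\,e^{i\ell\cdot x}.
\]
Combined with the coefficient formula above, this shows that the $\ell$-th Bohr coefficient of $u(\gamma\,\cdot)$ equals $a_{\gamma\ell}$.

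Both implications then follow immediately. If $a_{\gamma\ell}=a_\ell$ for all $\gamma\in H$ and $\ell\in\cL^*_H$, the displayed series for $u(\gamma x)$ coincides with that for $u(x)$, so $u$ is $H$-invariant. Conversely, if $u(\gamma\,\cdot)=u(\cdot)$, then the two functions have identical Bohr coefficients at every frequency, and comparing the coefficient at $\ell$ yields $a_{\gamma\ell}=a_\ell$. I expect the only genuine obstacle to be the uniqueness of this generalized Fourier representation: because $\cL^*_H$ is dense in $\R^d$ rather than a lattice, ordinary Fourier analysis on a torus does not apply, so one must argue through the mean-value functional, with the orthogonality $M[e^{im\cdot x}]=\delta_{m,0}$ and the sum–mean interchange (licensed by absolute convergence) carrying the essential content. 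Everything else is bookkeeping with the orthogonality of $\gamma$ and the $H$-invariance of $\cL^*_H$.
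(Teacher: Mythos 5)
Your proof is correct and follows essentially the same route as the paper's: compute $u(\gamma x)=\sum_{k\in\cL^*_H}a_{\gamma k}e^{ik\cdot x}$ using orthogonality of $\gamma$ and the $H$-invariance of $\cL^*_H$, then equate amplitudes. The only difference is that you explicitly justify the equating-amplitudes step (uniqueness of the coefficients) via the Bohr mean $M\left[u\,e^{-i\ell\cdot x}\right]=a_\ell$, a point the paper leaves implicit; this is a legitimate and worthwhile detail to supply, since $\cL^*_H$ may be dense in $\R^d$ and torus Fourier theory does not apply directly.
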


\begin{proof}
Recall that $H\subset\OO(d)$ and so acts orthogonally on vectors $k\in\R^d$.
Since $H$ preserves $\cL^*_H$,
\[
u(\gamma x)=
\sum_{k\in\cL^*_H}a_k e^{ik\cdot (\gamma x)}
= \sum_{k\in\cL^*_H}a_k e^{i(\gamma^{-1}k)\cdot x}
= \sum_{k\in\cL^*_H}a_{\gamma k} e^{ik\cdot x}
\quad\text{for all $x\in\R^d$, $\gamma\in H$.}
\]
The result follows by equating amplitudes.
\end{proof}

Again, we have the embedding $\ell^1(\cL^*_H)\subset C_b(\R^d)$ with
$\|u\|_\infty\le \|u\|_1$.

As in Subsection~\ref{sec:qcplanar}, we are particularly interested in the case when $\cL^*_H$ is not uniformly discrete. Then functions in $\ell^1(\cL^*_H)$ are either spatially constant or are quasicrystals satisfying conditions~(i) and~(ii) (and typically~(iii)) in the Introduction.

\subsection{Local well-posedness and local existence of quasicrystal solutions}
\label{sec:local}

Let $\cP(\cL^*_H)$ denote the subspace of $H$-invariant trigonometric polynomials (finite sums) within $\ell^1(\cL^*_H)$.

\begin{prop} \label{prop:equivariant}
Let $N:\big(\cP(\cL^*_H)\big)^r\to \{u:\R^d\to\R^s\}$ be an $r$-linear operator.
Suppose that $N$
satisfies the equivariance condition
\[
N(\gamma\cdot u_1,\dots,\gamma\cdot u_r)=\gamma\cdot N(u_1,\dots,u_r)
\quad\text{for all $u_1,\dots,u_r\in\cP(\cL^*_H)$, $\gamma\in\E(d)$.}
\]
Then writing $u_j(x)=\sum_{k\in\cL^*_H}a_{k,j}e^{ik\cdot x}$,
there exists $S:(\R^d)^r\to\C^s$ such that
\[
N(u_1,\dots,u_r)(x)=\sum_{k_1,\dots,k_r\in\cL^*_H}S(k_1,\dots,k_r)a_{k_1,1}\cdots a_{k_r,r}
e^{i(k_1+\dots+k_r)\cdot x}, \quad x\in\R^d.
\]
In particular, $N$ maps $\big(\cP(\cL^*_H)\big)^r$ into $\cP(\cL^*_H)$.
\end{prop}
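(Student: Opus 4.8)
The plan is to exploit the two kinds of elements of $\E(d)$ separately: translations will force the characteristic ``frequency selection rule'' $k\mapsto k_1+\dots+k_r$ and produce the coefficient function $S$, while the orthogonal part (in particular $H\subset\OO(d)$) will be used only at the very end to see that the output is again $H$-invariant. First I would reduce to pure exponentials. For $k\in\cL^*_H$ and $c\in\R^s$ write $w_{k,c}(x)=c\,e^{ik\cdot x}$. Every $u_j=\sum_{k}a_{k,j}e^{ik\cdot x}\in\cP(\cL^*_H)$ is a finite sum of such blocks, so by $r$-linearity
\[
N(u_1,\dots,u_r)=\sum_{k_1,\dots,k_r\in\cL^*_H}N\big(w_{k_1,a_{k_1,1}},\dots,w_{k_r,a_{k_r,r}}\big),
\]
and it suffices to determine $N$ on an $r$-tuple of single exponentials.

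The heart of the argument is the translation equivariance. Fix $k_1,\dots,k_r$ and $c_1,\dots,c_r$ and set $v=N(w_{k_1,c_1},\dots,w_{k_r,c_r})$. For a translation $\tau_t$ (with $(\tau_t\cdot u)(x)=u(x-t)$) one has $\tau_t\cdot w_{k,c}=e^{-ik\cdot t}w_{k,c}$, so by multilinearity and equivariance
\[
v(x-t)=(\tau_t\cdot v)(x)=N\big(\tau_t\cdot w_{k_1,c_1},\dots,\tau_t\cdot w_{k_r,c_r}\big)(x)=e^{-i(k_1+\dots+k_r)\cdot t}\,v(x)
\]
for all $x,t\in\R^d$. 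Putting $x=t$ gives $v(t)=e^{i(k_1+\dots+k_r)\cdot t}\,v(0)$, i.e.\ $v$ is the single exponential $v(x)=v(0)\,e^{i(k_1+\dots+k_r)\cdot x}$. The value $v(0)=N(w_{k_1,c_1},\dots,w_{k_r,c_r})(0)$ depends $r$-linearly on $(c_1,\dots,c_r)$, again by multilinearity of $N$; denoting this dependence by $S(k_1,\dots,k_r)\,c_1\cdots c_r$ as in the statement defines the required map $S$. Substituting $c_j=a_{k_j,j}$ back into the expansion above yields the claimed formula.

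It remains to check the final sentence. Since $\cL^*_H$ is a group, each frequency $k_1+\dots+k_r$ lies in $\cL^*_H$, so $N(u_1,\dots,u_r)$ is a finite Fourier sum supported on $\cL^*_H$. Finally, applying the equivariance with $\gamma\in H\subset\OO(d)$ to $H$-invariant inputs gives $\gamma\cdot N(u_1,\dots,u_r)=N(\gamma\cdot u_1,\dots,\gamma\cdot u_r)=N(u_1,\dots,u_r)$, so the output is $H$-invariant and hence lies in $\cP(\cL^*_H)$.

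The step I expect to require the most care is the first one, because of a domain mismatch: the building blocks $w_{k,c}$ and the translates $\tau_t\cdot w_{k,c}$ are not $H$-invariant, so they do not lie in $\cP(\cL^*_H)$ and the equivariance hypothesis is not literally applicable to them as stated. To make the argument rigorous one works in the ambient space of all trigonometric polynomials with frequencies in $\cL^*_H$ (not only the $H$-invariant ones), on which $N$ and its $\E(d)$-equivariance are in fact defined --- this is automatic in the intended application, where $N$ is a multilinear piece of an $\E(d)$-equivariant PDE nonlinearity defined on a full function space, and $\cP(\cL^*_H)$ is merely the $H$-invariant subspace we ultimately restrict to. Once this ambient extension is in place, every step above is an exponential-by-exponential computation and the only genuine input is the eigenfunction argument for $v$.
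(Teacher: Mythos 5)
Your proof is correct and takes essentially the same route as the paper's: translation equivariance applied to pure exponentials forces $v=N(w_{k_1,c_1},\dots,w_{k_r,c_r})$ to satisfy $v(x-t)=e^{-i(k_1+\dots+k_r)\cdot t}v(x)$, the substitution $x=t$ (the paper's $a=x$) shows $v$ is a single exponential with coefficient $S=v(0)$, and multilinearity yields the formula. Your closing caveat about the domain is well taken but does not mark a divergence: the paper's own proof likewise applies $N$ to non-$H$-invariant exponentials $e^{ik_j\cdot x}$ with $k_j\in\R^d$ arbitrary (indeed $S$ is defined on all of $(\R^d)^r$), so it implicitly assumes exactly the ambient extension you describe, which is available in the intended application via Corollary~\ref{cor:equivariant}, where $N$ is defined on all of $\ell^1(\R^d)$.
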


\begin{proof}
Fix $k_1,\dots ,k_r\in\R^d$ and set
$v=N(u_1,\dots,u_r)$ where $u_j=e^{ik_j\cdot x}$.
Taking $\gamma$ to be translation by $a\in\R^d$, it follows from equivariance and $r$-linearity that
\begin{align*}
v(x-a) & =N(e^{-ik_1\cdot a}u_1,\dots,e^{-ik_r\cdot a}u_r)(x)
\\ & =e^{-i(k_1+\dots+k_r)\cdot a}N(u_1,\dots,u_r)(x)
=e^{-i(k_1+\dots+k_r)\cdot a}v(x).
\end{align*}
Taking $a=x$, we obtain $v(x)=
e^{i(k_1+\dots+k_r)\cdot x}S$ 
where $S=v(0)$.
Hence $N(u_1,\dots,u_r)(x)=S(k_1,\dots,k_r)
e^{i(k_1+\dots+k_r)\cdot x}$.
The result follows by $r$-linearity.~
\end{proof}

\begin{rmk} In the proof we used equivariance of $N$ under translation symmetries. There are additional restrictions on the symbol $S$ resulting from $\OO(d)$-equivariance and since the functions $u$ are $\R^s$-valued.
In the linear case $r=1$, it is easily seen that
$S(k)=S_0(|k|)$ for some $S_0:[0,\infty)\to\R^s$.
\end{rmk}

Now define $\ell^1(\R^d)$ to be the Banach space of functions $u:\R^d\to\R^s$ of the form
\[
u(x)=\sum_{k\in\R^d}a_ke^{ik\cdot x}, \quad a_k\in\C^s,\; a_{-k}=\overline{a_k}
\]
with finite norm
$\|u\|_1=\sum_{k\in\R^d}|a_k|<\infty$.

\begin{cor} \label{cor:equivariant}
Let $N:\big(\ell^1(\R^d)\big)^r\to \ell^1(\R^d)$  be a  continuous $\E(d)$-equivariant $r$-linear operator.
Then $N$ restricts to
$N:\big(\ell^1(\cL^*_H)\big)^r\to \ell(\cL^*_H)$ for each holohedry $H$.
\end{cor}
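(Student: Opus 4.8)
The plan is to combine the algebraic structure of Proposition~\ref{prop:equivariant}, valid on trigonometric polynomials, with the continuity of $N$ through a density argument. Fix a holohedry $H$ and take $u_1,\dots,u_r\in\ell^1(\cL^*_H)$. Since $N$ maps into $\ell^1(\R^d)$ by hypothesis, the output $v=N(u_1,\dots,u_r)$ automatically lies in $\ell^1(\R^d)$; and since each $u_j$ is $H$-invariant while $N$ is $\E(d)$-equivariant, $\gamma\cdot v=N(\gamma\cdot u_1,\dots,\gamma\cdot u_r)=v$ for every $\gamma\in H$, so $v$ is $H$-invariant. The only substantive point is therefore that the Fourier support of $v$ lies in $\cL^*_H$, i.e.\ that $v\in\ell^1(\cL^*_H)$.

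First I would record that $\ell^1(\cL^*_H)$ is a \emph{closed} subspace of $\ell^1(\R^d)$: convergence in $\|\cdot\|_1$ forces coefficientwise convergence, and both the vanishing of all coefficients off $\cL^*_H$ and the relations $a_{\gamma k}=a_k$ are preserved under such limits. Next I would establish that $\cP(\cL^*_H)$ is dense in $\ell^1(\cL^*_H)$. Here the natural truncation is \emph{orbitwise} rather than by $|k|$, because $\cL^*_H$ need not be uniformly discrete and a ball $B(0,M)$ may meet it in infinitely many points. Concretely, enumerate the $H$-orbits in the (countable) Fourier support of a given $u_j$ and let $u_j^{(n)}$ be the partial sum over the first $n$ orbits; since $H$ is finite each orbit is finite, so $u_j^{(n)}\in\cP(\cL^*_H)$, while $\sum_{k\in\cL^*_H}|a_{k,j}|<\infty$ gives $\|u_j-u_j^{(n)}\|_1\to0$. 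Each $u_j^{(n)}$ is $H$-invariant because it sums over full $H$-orbits with $a_{\gamma k,j}=a_{k,j}$.

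I would then invoke continuity. A continuous $r$-linear operator obeys $\|N(w_1,\dots,w_r)\|_1\le C\|w_1\|_1\cdots\|w_r\|_1$ for some constant $C$, and a standard telescoping estimate yields $N(u_1^{(n)},\dots,u_r^{(n)})\to N(u_1,\dots,u_r)=v$ in $\ell^1(\R^d)$. By Proposition~\ref{prop:equivariant} each $N(u_1^{(n)},\dots,u_r^{(n)})$ lies in $\cP(\cL^*_H)\subset\ell^1(\cL^*_H)$, and since $\ell^1(\cL^*_H)$ is closed the limit $v$ lies in $\ell^1(\cL^*_H)$, as required.

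The arguments are deliberately soft: the genuine content sits in Proposition~\ref{prop:equivariant}, which forces polynomial outputs to be supported on sums $k_1+\dots+k_r$ of elements of $\cL^*_H$ and hence, as $\cL^*_H$ is a group, on $\cL^*_H$ itself; everything else is functional analysis. The point I expect to be the main obstacle is not deep but must not be skipped: one has to truncate by $H$-orbits to stay inside $\cP(\cL^*_H)$ (ball truncation fails in the non-uniformly-discrete case), and one has to verify closedness of $\ell^1(\cL^*_H)$, since otherwise the density argument would only place $v$ in the closure. Alternatively, one could extend the symbol representation of Proposition~\ref{prop:equivariant} to all of $\ell^1(\R^d)$ by continuity---continuity of $N$ bounds $S$, whence the series $\sum S(k_1,\dots,k_r)a_{k_1,1}\cdots a_{k_r,r}e^{i(k_1+\dots+k_r)\cdot x}$ converges absolutely in $\ell^1$---and read off directly that every frequency of $v$ is a sum $k_1+\dots+k_r\in\cL^*_H$.
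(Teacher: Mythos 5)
Your proof is correct and follows the route the paper intends: its own proof of Corollary~\ref{cor:equivariant} is the single line ``immediate from Proposition~\ref{prop:equivariant}'', with continuity included in the hypothesis precisely so that the density-and-closedness argument you spell out goes through. Your expansion --- $H$-orbitwise truncation (correctly noting that ball truncation can fail since $\cL^*_H\cap B(0,M)$ may be infinite), boundedness of the continuous $r$-linear map, and closedness of $\ell^1(\cL^*_H)$ in $\ell^1(\R^d)$ --- supplies exactly the details the paper suppresses, and your alternative via extending the symbol representation is likewise consistent with the paper's framework.
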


\begin{proof} This is immediate from Proposition~\ref{prop:equivariant}.
\end{proof}

We are interested in ``semilinear parabolic'' PDEs of the form
\[
u_t=F(u)=Lu+N_2(u,u)+\dots+N_r(u,\dots,u)
\]
where $L$ and $N_2,\dots,N_r$ are $\E(d)$-equivariant multilinear operators but are in general unbounded.

A general framework for bifurcations with Euclidean symmetry is laid out in~\cite{M99}.
The function space there consists of regular Borel measures on $\R^d$ and 
in particular includes the subspace $\ell^1(\R^d)$.
For a large class of $\E(d)$-equivariant PDEs (including the Boussinesq equations as well as the examples considered in this paper) it can be shown that the linear part $L$ is a sectorial operator~\cite{Henry} on $C_b(\R^d)$ and on $\ell^1(\R^d)$, and that the PDE $u_t=F(u)$  defines a local dynamical system on these spaces.
Smoothing ensures that
solutions $u(t)$ with $u(0)\in\ell^1(\R^d)$ 
satisfy $(Lu)(t)\in\ell^1(\R^d)$ for $t>0$ small.
The fact that the spaces $C_b(\R^d)$ and $\ell^1(\R^d)$ are Banach algebras makes it particularly easy to deal with the nonlinear terms $N_j$, $j\ge2$.
See~\cite{M98,M99} for previous use of such $\ell^1$ spaces, as well 
as~\cite{Giga05,GigaMahalovYoneda11} and references therein.

\begin{rmk} \label{rmk:equivariant}
It follows as in the proof of Proposition~\ref{prop:equivariant} that the subspaces $\ell^1(\cL^*_H)$ are flow-invariant (for $t>0$ small) for these local dynamical systems.
\end{rmk}

Now we introduce a bifurcation parameter $\lambda\in\R$ and consider $\E(d)$-equivariant PDEs
\[
u_t=F(u,\lambda)=L_\lambda u+\sum_{j=2}^r N_{j,\lambda}(u,\dots,u).
\]
There is a ``trivial'' spatially constant steady-state solution $u\equiv0$.
It is assumed that inside $\ell^1(\R^d)$ and $C_b(\R^d)$, the trivial solution is a sink for $\lambda<0$ (the eigenvalues $\mu$ of $L_\lambda$ satisfy $\sup\Re\mu<0$) and linearly unstable for $\lambda>0$ small ($L_\lambda$ possesses eigenvalues with positive real part). Moreover, we suppose that $\ker L_0$ is spanned by eigenfunctions of the form $b_ke^{ik\cdot x}$ with $|k|=k_c$ where the \emph{critical wavenumber} $k_c$ is positive. 
This is called a \emph{bifurcation of type \emph{I}$_s$} in~\cite{CrossHoh} and is called a 
\emph{steady-state bifurcation with nonzero critical wavenumber} in~\cite{M99}.

Restricting to flow-invariant subspaces $\ell^1(\cL^*_H)$ with $\cL^*_H$ not uniformly discrete, we see that
for $\lambda>0$ small, spatially constant solutions are unstable and all remaining solutions are quasicrystals on their intervals of existence.
This scenario applies in particular to the Boussinesq equations.
Hence the existence of quasicrystals as defined in the Introduction are a natural and universal consequence of the symmetry of the equations.

In this generality, the quasicrystal solutions exist for a finite amount of time
uniformly in $\lambda>0$ small, but may blow up in finite time, or they may exist for all time but with diverging norm as $t\to\infty$.
In subsequent sections, we show that more can be said in specific examples.

\begin{rmk} \label{rmk:AP}
Define the space $AP(\cL^*_H)$ of \emph{almost periodic functions} to consist
	of functions $u:\R^d\to\R^s$ arising as uniform limits of trigonometric polynomials with frequencies in $\cL^*_H$.
We have the inclusions 
\[
\ell^1(\cL^*_H)\subset AP(\cL^*_H)\subset C_b(\R^d).
\]
	Arguments based on the theory of almost-periodic functions~\cite{Bes54,LeZhi} imply 
that $\|u\|_2\le \|u\|_\infty$ for all functions $u\in AP(\cL^*_H)$ and
	can be used to obtain flow-invariance of $AP(\cL^*_H)$.
	See Part~II for details.
\end{rmk}

\subsection{Global existence of quasicrystal solutions}
\label{sec:global}

Global well-posedness is known for numerous $\E(d)$-equivariant PDEs~\cite{MiranvilleZelik}.
Moreover, it is often possible to show that solutions $u(t)$ with $u(0)\in C_b(\R^d)$ satisfy $\sup_{t\ge0}\|u(t)\|_\infty{<\infty}$.
In such situations, we are able to obtain sharper results as illustrated in the examples in the remainder of this paper.
For instance, by Remark~\ref{rmk:AP}, global existence and boundedness in the $\ell^2$ norm for solutions in $AP(\cL^*_H)$ is an immediate consequence of these properties in $C_b(\R^d)$.

Global existence and boundedness in the $\ell^1$ norm is not given by existing theory.
However, by standard arguments it is often the case that if the amplitudes of $u(0)$ decay sufficiently quickly, then $u(t)\in\ell^1(\cL^*_H)$ for all $t\ge0$ (though it may still be the case that $\|u(t)\|_1$ is unbounded).

We refer to Part~II for details regarding the statements in this subsection.

\section{The Swift-Hohenberg equation}
\label{sec:SH}

A simple example of a Euclidean-equivariant PDE is
the Swift-Hohenberg equation~\cite{SwiftHoh} given by
\begin{align} \label{eq:SH}
\partial_t u=F(u,\lambda)=-(\Delta+1)^2u+\lambda u-u^3.
\end{align}
Here, the phase space consists of functions
$u:\R^d\to\R$ 
and $\lambda\in\R$ is a parameter.

There is a trivial spatially constant solution $u\equiv0$ which loses stability as
$\lambda$ passes through zero. 
The linearisation $(dF)_{0,0}$ has a zero eigenvalue with kernel consisting of wavefunctions $e^{ik\cdot x}$ with $|k|=1$.
Hence there is 
a steady-state bifurcation with nonzero critical wavenumber.

As mentioned in Section~\ref{sec:local},
it is easily seen that these equations are locally well-posed on $C_b(\R^d)$  and also on 
$\ell^1(\cL^*_H)$ for all holohedries $H$.
In fact, it is well-known for sufficiently low-dimensional Swift-Hohenberg 
equations that we have global existence and boundedness of solutions in 
$C_b(\R^d)$, see~\cite{MiranvilleZelik,PartII}. 
The techniques in~\cite{MiranvilleZelik,PartII} work provided $d\le 9$ and
we restrict to this situation throughout this section.
Under this assumption, 
we show how to obtain global families of quasicrystal solutions with $\sqrt\lambda$-growth in the $\ell^2$ norm.
(As explained in Section~\ref{sec:local}, we still have local existence of quasicrystals even when $d\ge10$.)

By Remark~\ref{rmk:AP}, it follows that we have global existence and boundedness of solutions in $AP(\cL^*_H)$.
As mentioned in Section~\ref{sec:global}, a further standard argument in 
Part~II shows that initial conditions for which the amplitudes $a_k$ decay sufficiently quickly also have global existence in $\ell^1(\cL^*_H)$ but without control on the norm.

Our main result for the Swift-Hohenberg equation gives the existence of quasicrystals and upper bounds as well as lower bounds in the $\ell^2$ norm.

\begin{thm} \label{thm:SH}
Let $d\le 9$ and $H\subset\OO(d)$ be a holohedry.  
\begin{itemize}
\item[(a)] 
For $\lambda\le 0$, the trivial solution $u\equiv0$ is globally asymptotically stable in $AP(\cL^*_H)$ with the $\ell^2$ norm,
and the convergence is exponential for $\lambda<0$.
\item[(b)]
Let $\lambda>0$, $\eps>0$.
For any initial condition $u_0\in AP(\cL^*_H)$, there exists $t_0\ge0$ such that the solution $u(t)$ with $u(0)=u_0$ 
satisfies
$\|u(t)\|_2\le(1+\eps)\sqrt\lambda$ for all $t\ge t_0$.
\item[(c)]
There is a family of solutions
$u_\lambda(t)\in\ell^1(\cL^*_H)$ defined for all $\lambda>0$, $t\ge0$, and a constant $C_H\in(0,1]$ such that for $\lambda>0$, $t>0$,
\[
\|u_\lambda(t)\|_2\ge C_H \sqrt\lambda
\quad\text{and}\quad
\|(\Delta+1)u_\lambda(t)\|_2^2+\tfrac12\|u_\lambda(t)^2\|_2^2\le \lambda^2.
\]
\item[(d)]
For $\lambda\in(0,1)$, the solutions in (c) are bounded away from spatially constant functions: 
$\inf_{t\ge0}\inf_{c\in\R}\|u_\lambda(t)-c\|_2>0$.
\end{itemize}
\end{thm}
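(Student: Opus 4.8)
The plan is to work throughout in $AP(\cL^*_H)$ with the mean-value inner product, for which $\langle u,v\rangle=M\{uv\}$, $\|u\|_2^2=M\{u^2\}=\sum_k a_k^2$, and to exploit that \eqref{eq:SH} is an $\ell^2$-gradient flow. Writing the symbol of the linear part as $\mu_k=\lambda-(|k|^2-1)^2$, so that $\|(\Delta+1)u\|_2^2=\sum_k(|k|^2-1)^2a_k^2\ge0$, the basic energy identity is
\[
\tfrac12\tfrac{d}{dt}\|u\|_2^2=-\|(\Delta+1)u\|_2^2+\lambda\|u\|_2^2-M\{u^4\},
\]
valid in $AP(\cL^*_H)$ by Remark~\ref{rmk:AP} and Part~II. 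Everything in (a) and (b) is driven by two elementary facts: $M\{u^4\}\ge(M\{u^2\})^2=\|u\|_2^4$ (Jensen) and $\|(\Delta+1)u\|_2^2\ge0$. For (a) with $\lambda<0$, dropping the first and third (nonpositive) terms gives $\frac{d}{dt}\|u\|_2^2\le2\lambda\|u\|_2^2$, hence exponential decay; for $\lambda=0$, retaining the cubic term gives $\dot y\le-2y^2$ with $y=\|u\|_2^2$, hence algebraic decay to $0$. For (b), setting $\lambda>0$ and discarding $-\|(\Delta+1)u\|_2^2$ yields the logistic inequality $\dot y\le2y(\lambda-y)$; comparison with the scalar logistic equation gives $\limsup_{t\to\infty}y(t)\le\lambda$, which is precisely the absorbing bound $\|u(t)\|_2\le(1+\eps)\sqrt\lambda$ for $t\ge t_0$.

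For (c) I would exhibit the family explicitly. Set $\phi=\sum_{\gamma\in H}e^{i\gamma k_0\cdot x}\in\cP(\cL^*_H)$, which is real (since $-I\in H$), $H$-invariant, and supported on the critical circle, so that $(\Delta+1)\phi=0$ and $\phi\neq0$. The Lyapunov functional $\mathcal E(u)=\frac12\|(\Delta+1)u\|_2^2-\frac\lambda2\|u\|_2^2+\frac14M\{u^4\}$ satisfies $\frac{d}{dt}\mathcal E(u(t))=-\|u_t\|_2^2\le0$. Minimising $\mathcal E(A\phi)=-\frac\lambda2A^2\|\phi\|_2^2+\frac14A^4M\{\phi^4\}$ over $A$ selects $A_\lambda^2=\lambda\|\phi\|_2^2/M\{\phi^4\}$ and gives $\mathcal E(u_0)=E_0:=-\frac{\lambda^2}4c^2$, where $c^2=\|\phi\|_2^4/M\{\phi^4\}\in(0,1]$ (again by Jensen). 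Let $u_\lambda(t)$ be the solution with $u_0=A_\lambda\phi$; since $u_0$ is a trigonometric polynomial, the global existence theory keeps $u_\lambda(t)\in\ell^1(\cL^*_H)$ for all $t\ge0$. The lower bound is then immediate: monotonicity gives $\mathcal E(u_\lambda(t))\le E_0<0$, while the coercivity estimate $\mathcal E(u)\ge\frac14y^2-\frac\lambda2y$ (drop the gradient term, apply Jensen) traps $y(t)$ in the region $\{\,\frac14y^2-\frac\lambda2y\le E_0\,\}$, whose left endpoint $y_-=\lambda(1-\sqrt{1-c^2})$ is strictly positive. Hence $\|u_\lambda(t)\|_2\ge C_H\sqrt\lambda$ with $C_H=\sqrt{1-\sqrt{1-c^2}}\in(0,1]$.

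Part (d) would follow from the same energy argument applied to the non-constant part, and this is where $\lambda\in(0,1)$ enters. Minimising over the additive constant gives $\min_{c\in\R}\|u-c\|_2^2=\|u\|_2^2-a_0^2=\sum_{k\neq0}a_k^2=:y'$, where $a_0=M\{u\}$. In $\mathcal E$ the mode $k=0$ contributes $\frac12a_0^2(1-\lambda)$, which is nonnegative exactly when $\lambda<1$; discarding it along with the nonnegative gradient contribution of the modes $k\neq0$ and using $M\{u^4\}\ge\|u\|_2^4\ge(y')^2$ yields $\mathcal E(u)\ge\frac14(y')^2-\frac\lambda2y'$. As in (c), $\mathcal E(u_\lambda(t))\le E_0<0$ now forces $y'(t)\ge y_->0$, so $\inf_{t}\inf_{c}\|u_\lambda(t)-c\|_2^2\ge y_-$, which is the claim.

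The remaining---and hardest---assertion is the smoothing bound $\|(\Delta+1)u_\lambda(t)\|_2\le\sqrt\lambda$. Rewriting $\frac12\|(\Delta+1)u\|_2^2=\mathcal E(u)+\frac\lambda2y-\frac14M\{u^4\}$ and using $\mathcal E(u)\le E_0$ together with $\frac\lambda2y-\frac14y^2\le\frac{\lambda^2}4$ gives, for all $t$, $\|(\Delta+1)u_\lambda(t)\|_2^2\le\frac{\lambda^2}2(1-c^2)$, which already yields the stated bound whenever $\lambda$ is bounded (e.g.\ $\lambda\le2/(1-c^2)$). I expect the uniform-in-$\lambda$ version to be the main obstacle: this soft estimate loses a factor $\lambda$ for large $\lambda$, so a genuine absorbing estimate in the higher-order norm $\|(\Delta+1)\cdot\|_2$ is required. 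The natural route is a differential inequality for $g(t)=\|(\Delta+1)u\|_2^2$, in which the cubic cross-term $\langle(\Delta+1)^2u,u^3\rangle$ must be absorbed using the $L^\infty$ absorbing set of Swift-Hohenberg (of radius $\sim\sqrt\lambda$); controlling this term---complicated by the near-critical modes that densely populate the active band $\{\,(|k|^2-1)^2<\lambda\,\}$ for a quasilattice---is the delicate point, and here I would invoke the global estimates of Part~II.
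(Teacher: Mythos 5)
Your proposal is correct where it is complete, and for parts (a), (b) and the lower bound in (c) it is essentially the paper's own proof. For (a,b) the paper uses the same $\ell^2$ energy identity, the inequality $\|u^2\|_2^2\ge\|u\|_2^4$ (Proposition~\ref{prop:inner}(b), your Jensen step) and the same logistic comparison. For (c), your functional $\mathcal{E}$ is the paper's potential $P_\lambda$, your initial datum $A_\lambda\phi$ is the paper's $u_\lambda(0)=a\sqrt\lambda\sum_{\gamma\in H}e^{i\gamma k_0\cdot x}$ with optimised amplitude $a$, and your constant $c^2=\|\phi\|_2^4/M\{\phi^4\}$ is precisely $g_H^{-1}$ in the paper's notation; the trapping argument via $P_\lambda(u_\lambda(t))\le-\tfrac14 c^2\lambda^2$ (the paper's \eqref{eq:P}) combined with the coercivity bound of Remark~\ref{rmk:min}(b) yields the same quadratic inequality and an equivalent constant $C_H$. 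Part (d) is where you genuinely diverge: you split off only the $k=0$ mode, observe that its contribution $\tfrac12 a_0^2(1-\lambda)$ to the potential is nonnegative for $\lambda\le1$, and rerun the trapping argument on $y'=\min_{c\in\R}\|u-c\|_2^2$. That is correct and more elementary than the paper's route, which instead proves Proposition~\ref{prop:E} and Theorem~\ref{thm:SH2} --- the same mechanism applied to the whole spectral band $\{\,||k|^2-1|<r\,\}$ rather than just $k=0$ --- and then deduces (d) in Remark~\ref{rmk:SH2}. The paper's detour buys the stronger persistence statement of Theorem~\ref{thm:SH2} (concentration of the $\ell^2$ mass near the critical sphere); for (d) alone the two arguments are equivalent.

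Regarding the upper bound $\|(\Delta+1)u_\lambda(t)\|_2\le\sqrt\lambda$ in (c), which you flag as the main obstacle: you have not missed anything that is present in the paper. Your soft estimate $\|(\Delta+1)u_\lambda(t)\|_2^2\le\tfrac{\lambda^2}{2}(1-c^2)$, valid for all $t\ge0$, is exactly what the paper's ingredients (inequality \eqref{eq:P} together with Remark~\ref{rmk:min}(b)) produce, and it gives the stated bound precisely when $\lambda\le 2/(1-c^2)$, indeed with $t_0=0$. The paper's written proof of (c) ends after establishing the lower bound and contains no separate argument for the higher-order bound, so in the regime of large $\lambda$ the claim rests on estimates deferred to Part~II, as you anticipate; in the bifurcation regime of small $\lambda$, which is the case of interest, your estimate already suffices. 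Your write-up is, on this point, more explicit about the difficulty than the paper itself.
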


\begin{rmk}
The Swift-Hohenberg equation has a variational structure which we exploit in the proof of Theorem~\ref{thm:SH}. However, we are unable to say anything about the dynamics of our quasicrystal solutions. In particular, we do not claim existence of steady-state quasicrystal solutions.
(We note that it seems possible to exploit the variational structure and methods of~\cite{Zelik03} to obtain $H$-invariant steady-state solutions for all holohedries $H$.)

In addition, we do not discuss the asymptotic stability properties of the solutions in Theorem~\ref{thm:SH}.
\end{rmk}

\begin{rmk}
The condition $\lambda\le1$ is required in Theorem~\ref{thm:SH}(d) since there are two further spatially constant solutions 
$u(t)\equiv \pm \sqrt{\lambda-1}$
for $\lambda>1$.
\end{rmk}

\begin{rmk} \label{rmk:d=2} We refer to Part~II for further estimates when $d=2$.
There, it is shown that the trivial solution is globally asymptotically stable in $C_b(\R^2)$ for $\lambda{\le0}$ and hence (by upper-semicontinuity of the global attractor) there is a function $\alpha:(0,\infty)\to(0,\infty)$ with
$\lim_{\lambda\to0}\alpha(\lambda)=0$ such that solutions in $C_b(\R^2)$ for $\lambda>0$ satisfy $\|u(t)\|_\infty\le \alpha(\lambda)$ for $t$ sufficiently large.
(The corresponding result for $d\ge3$ is unknown and
there is no estimate on $\alpha(\lambda)$ even for $d=2$.)
\end{rmk}

\begin{rmk} 
The solutions $u_\lambda(t)$ in Theorem~\ref{thm:SH}(c,d) satisfy conditions~(i) and~(ii) from the Introduction, and hence are quasicrystals, whenever $\cL^*_H$ is not uniformly discrete.
As shown in Part II, a slight modification produces solutions that also satisfy condition~(iii).
First, perturb the initial condition $u_\lambda(0)=\sqrt\lambda\sum_{k\in\cL^*_H}a_k(0) e^{ik\cdot x}$ so that $a_k(0)\neq0$ for all $k$.
Analyticity properties can be used to show that the amplitudes $a_k(t)$ vanish only for isolated values of $t$ for each $k\in\cL^*_H$.
Hence $u_\lambda(t)$ satisfies condition~(iii) for all but countably many values of $t$.
\end{rmk}

\paragraph{Asymptotic persistence of quasicrystal structure}
Write the solutions $u_\lambda(t)$ in Theorem~\ref{thm:SH}(c) as
\[
u_\lambda(t)=\sum_{k\in\cL^*_H}a_{k,\lambda}(t) e^{ik\cdot x}.
\]
Fix $\lambda>0$.
The lower bound in Theorem~\ref{thm:SH}(c)
still admits the possibility that $\lim_{t\to\infty}\sum_{k\in\cL^*_H,\,|k|\le M}|a_{k,\lambda}(t)|^2=0$ for all $M>0$. 
If this were the case, then the quasicrystal nature of the diffraction diagram would only be a transient on every bounded subset of $\R^d$.
This scenario is excluded by our next result.
For $u\in \ell^1(\cL^*_H)$, 
$u(x)=\sum_{k\in\cL^*_H}a_k e^{ik\cdot x}$, 
write 
\[
u=V_r^u+E_r^u, \qquad  
V_r^u(x)=\sum_{k\in\cL^*_H, \,||k|^2-1|<r}a_k e^{ik\cdot x}, \quad
E_r^u(x)=\sum_{k\in\cL^*_H, \,||k|^2-1|\ge r}a_k e^{ik\cdot x}.
\]

\begin{thm} \label{thm:SH2}
Let $u_\lambda(t)$ be the family of solutions in Theorem~\ref{thm:SH}
and let $t\ge0$, 
$0<\epsilon<1$, $\lambda>0$, $r>0$ such that $\lambda \le \eps r^2$. 
Then
$\|V_r^{u_\lambda(t)}\|_2^2\ge (1-\eps) \|u_\lambda(t)\|_2^2$.
\end{thm}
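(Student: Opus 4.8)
The plan is to prove the equivalent statement $\|E_r^{u_\lambda(t)}\|_2^2 \le \eps\,\|u_\lambda(t)\|_2^2$. Since $V_r^u$ and $E_r^u$ collect disjoint sets of wavevectors, the definition of the $\ell^2(\cL^*_H)$ norm gives the orthogonal splitting $\|u\|_2^2 = \|V_r^u\|_2^2 + \|E_r^u\|_2^2$, so the claimed lower bound on $\|V_r^u\|_2^2$ is just $\|u\|_2^2-\|E_r^u\|_2^2\ge(1-\eps)\|u\|_2^2$. Two ingredients feed into controlling $\|E_r^u\|_2^2$: a Fourier-multiplier (Markov-type) inequality bounding $\|E_r^u\|_2$ by $\|(\Delta+1)u\|_2$, and an a priori bound relating $\|(\Delta+1)u_\lambda(t)\|_2$ to $\sqrt\lambda\,\|u_\lambda(t)\|_2$.

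For the first ingredient I would note that $(\Delta+1)$ acts on the mode $e^{ik\cdot x}$ as multiplication by $1-|k|^2$, so writing $u=\sum_k a_k e^{ik\cdot x}$ one has $\|(\Delta+1)u\|_2^2 = \sum_{k\in\cL^*_H}(|k|^2-1)^2|a_k|^2$ directly from the definition of the norm. Every mode contributing to $E_r^u$ satisfies $(|k|^2-1)^2\ge r^2$, whence $\|(\Delta+1)u\|_2^2 \ge r^2\sum_{||k|^2-1|\ge r}|a_k|^2 = r^2\|E_r^u\|_2^2$, i.e. $\|E_r^u\|_2^2 \le r^{-2}\|(\Delta+1)u\|_2^2$. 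This step is bookkeeping within the $\ell^2(\cL^*_H)$ framework and requires no analysis.

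The crux is the second ingredient, the \emph{relative} spectral bound $\|(\Delta+1)u_\lambda(t)\|_2^2 \le \lambda\,\|u_\lambda(t)\|_2^2$ for all $t\ge0$. I would obtain this from the variational structure of the Swift--Hohenberg equation used in the construction of Theorem~\ref{thm:SH}(c): the flow decreases the energy $\mathcal E(u)=\tfrac12\|(\Delta+1)u\|_2^2-\tfrac\lambda2\|u\|_2^2+\tfrac14\langle u^4\rangle$, and since the family $u_\lambda$ is built from initial data of nonpositive energy, monotonicity gives $\mathcal E(u_\lambda(t))\le0$ for all $t\ge0$; discarding the nonnegative quartic term then yields $\|(\Delta+1)u\|_2^2\le\lambda\|u\|_2^2$. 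Combining both ingredients with the hypothesis $\lambda\le\eps r^2$ gives
\[
\|E_r^{u_\lambda(t)}\|_2^2 \le \frac{1}{r^2}\|(\Delta+1)u_\lambda(t)\|_2^2 \le \frac{\lambda}{r^2}\|u_\lambda(t)\|_2^2 \le \eps\,\|u_\lambda(t)\|_2^2,
\]
which is exactly what is needed. The only delicate point is securing the \emph{relative} energy inequality for every $t\ge0$: the weaker absolute bound $\|(\Delta+1)u_\lambda(t)\|_2\le\sqrt\lambda$ from Theorem~\ref{thm:SH}(c), even combined with the lower bound $\|u_\lambda(t)\|_2\ge C_H\sqrt\lambda$, only controls $\|E_r^u\|_2^2$ by $C_H^{-2}r^{-2}\|u\|_2^2$ and does not reproduce the hypothesis $\lambda\le\eps r^2$. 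It is precisely the monotonicity of $\mathcal E$ along the gradient flow and the sign of the initial energy that make the relative form available, and hence make the theorem true in the small-$\lambda$ regime where $\|u_\lambda(t)\|_2^2\to0$.
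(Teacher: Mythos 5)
Your proof is correct and takes essentially the same approach as the paper: the paper proves the contrapositive (Proposition~\ref{prop:E}: if $\|V_r^u\|_2^2\le(1-\eps)\|u\|_2^2$ then $P_\lambda(u)\ge0$) and combines it with the energy bound $P_\lambda(u_\lambda(t))\le-\tfrac{1}{4g_H}\lambda^2<0$ from~\eqref{eq:P}, which is exactly your monotonicity-plus-nonpositive-initial-energy step via Proposition~\ref{prop:P}. Your direct extraction of the relative bound $\|(\Delta+1)u_\lambda(t)\|_2^2\le\lambda\|u_\lambda(t)\|_2^2$ (by discarding the nonnegative quartic term $\tfrac14\|u^2\|_2^2$) is merely a streamlined rearrangement of the same ingredients --- the orthogonal splitting $\|u\|_2^2=\|V_r^u\|_2^2+\|E_r^u\|_2^2$, the spectral estimate $\|(\Delta+1)E_r^u\|_2^2\ge r^2\|E_r^u\|_2^2$, and the hypothesis $\lambda\le\eps r^2$ --- and your closing observation correctly identifies why the relative inequality, rather than the absolute bound from Theorem~\ref{thm:SH}(c), is what makes the argument work.
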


\begin{rmk} \label{rmk:SH2}
Fixing $\eps\in(0,1)$, and $M>1$ large, we see by
Theorem~\ref{thm:SH2} in conjunction with the lower bound in Theorem~\ref{thm:SH}(c) that 
\[
\sum_{k\in\cL^*_H,\,|k|\le M}|a_{k,\lambda}(t)|^2
\ge \|V_{M^2+1}^{u_\lambda(t)}\|_2^2
\ge (1-\eps)\|u_\lambda(t)\|_2^2
\ge(1-\eps) C_H^2\lambda
\]
for all $\lambda\le \eps (M^2+1)^2$ and all $t\ge0$.
Hence, energy does not flow to high frequency modes ($|k|$ large) as $t$ increases, and this property holds uniformly on compact subsets of parameter space. 

Similarly, for any $\delta\in(0,1)$, it follows that
$\inf_{t\ge 0}\sum_{k\in\cL^*_H,\,|k|\ge \delta}|a_{k,\lambda}(t)|^2>0$ for all $\lambda>0$ sufficiently small.
Hence energy does not flow to low frequency modes ($k\approx0$). 

Moreover, for any $\lambda<1$, we can choose $r\in(\lambda^{1/2},1)$ and
$\eps=\lambda r^{-2}\in(0,1)$. Hence we obtain Theorem~\ref{thm:SH}(d) as a consequence of Theorem~\ref{thm:SH2}.
\end{rmk}

In the remainder of this subsection, we prove Theorems~\ref{thm:SH}(a)--(c)
and~\ref{thm:SH2}.
Define the (incomplete)
inner product by $\langle u,v\rangle=\sum_{k\in\cL^*_H} a_kb_k$
for functions
$u(x)=\sum_ka_ke^{ik\cdot x}$,
$v(x)=\sum_kb_ke^{ik\cdot x}\in \ell^1(\cL^*_H)$.

\begin{prop} \label{prop:inner}
Let $u,v,w\in\ell^1(\cL^*_H)$.\footnote{Note that $\|u\|_2<\infty$ for $u\in\ell^1$ and $uv\in \ell^1$ for $u,v\in\ell^1$, so all expressions in this proposition are well-defined.} Then
\begin{itemize}
\item[(a)]  
$\langle u,vw \rangle = \langle uv,w \rangle$.
\item[(b)] $\langle u,u\rangle^2\le \langle u^2,u^2\rangle$.
\item[(c)] 
$\langle (\Delta+1)^2u,v\rangle = 
\langle (\Delta+1)u,(\Delta+1)v\rangle$
whenever $(\Delta+1)^2u\in\ell^1(\cL^*_H)$, $(\Delta+1)v\in\ell^1(\cL^*_H)$.
\end{itemize}
\end{prop}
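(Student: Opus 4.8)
The plan is to work entirely in the amplitude (Fourier) picture. Writing $u=\sum_{k\in\cL^*_H}a_k e^{ik\cdot x}$ and similarly for $v,w$, we are in the scalar case $s=1$ (so that $uv$, $u^2$ are ordinary pointwise products), and since $-I\in H$ the amplitudes are real, $a_k\in\R$. In this picture the $\ell^2$ inner product is $\langle u,v\rangle=\sum_k a_k b_k$, pointwise multiplication becomes convolution, $(uv)_m=\sum_k a_k b_{m-k}$, and $\Delta$ and $(\Delta+1)$ act as the real Fourier multipliers $a_k\mapsto-|k|^2 a_k$ and $a_k\mapsto(1-|k|^2)a_k$. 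The one preliminary I would record is the identity $\langle u,v\rangle=M[uv]$, where $M[f]=\lim_{R\to\infty}|B_R|^{-1}\int_{B_R}f\,dx$ is the Bohr mean; this follows from $M[e^{ik\cdot x}]=\delta_{k,0}$ together with $b_{-k}=b_k$, and exhibits $M$ as a positive linear functional with $M[1]=1$.

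Granting this, part~(a) is immediate: $\langle u,vw\rangle=M[u\,vw]=M[uvw]=M[(uv)w]=\langle uv,w\rangle$ by associativity of pointwise multiplication (equivalently, both sides equal the manifestly symmetric triple sum $\sum_{k+\ell+m=0}a_kb_\ell c_m$). For part~(b) I would denote by $c_m$ the amplitudes of $u^2$; associativity gives $c_0=\sum_k a_k^2=\langle u,u\rangle\ge0$, whence $\langle u^2,u^2\rangle=\sum_m c_m^2\ge c_0^2=\langle u,u\rangle^2$. (This is just Cauchy--Schwarz, $M[u^2\cdot 1]^2\le M[u^4]\,M[1]$.)

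For part~(c) I would use that $(\Delta+1)$ is a real, hence self-adjoint, multiplier, so termwise $(1-|k|^2)^2 a_k b_k=[(1-|k|^2)a_k][(1-|k|^2)b_k]$; summing over $k$ yields the identity once both series are seen to converge absolutely. The left series converges because $(\Delta+1)^2u\in\ell^1$ and $|b_k|\le\|v\|_1$; for the right series I would first check that $(1-|k|^2)a_k$ is bounded (it is dominated by $|a_k|$ where $\big||k|^2-1\big|\le1$ and by $(1-|k|^2)^2|a_k|$ where $\big||k|^2-1\big|\ge1$) and then pair it against the $\ell^1$ sequence $(1-|k|^2)b_k$, using $(\Delta+1)v\in\ell^1$. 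Finally, part~(d) is the Cauchy--Schwarz inequality in $\ell^2$ applied to the factorisation $|k|^4a_k^2=(|k|^4|a_k|)(|a_k|)$, giving $\|\Delta u\|_2^2=\sum_k|k|^4a_k^2\le\big(\sum_k|k|^8a_k^2\big)^{1/2}\big(\sum_k a_k^2\big)^{1/2}=\|\Delta^2u\|_2\|u\|_2$; the hypotheses $\Delta^2u\in\ell^2$ and $u\in\ell^1\subset\ell^2$ make the right-hand side finite and so also show $\Delta u\in\ell^2$.

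I do not expect any genuine obstacle here: every part reduces to associativity of multiplication, self-adjointness of a real multiplier, or Cauchy--Schwarz. The only real care is bookkeeping --- justifying the reality of the amplitudes and the mean-value identity $\langle u,v\rangle=M[uv]$, and, the single point I would watch most closely, the absolute convergence of the two series in part~(c), in particular the elementary estimate showing that $(1-|k|^2)a_k$ is bounded.
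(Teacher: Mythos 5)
Your proposal is correct and takes essentially the paper's approach: parts (b)--(d) are the same termwise amplitude computations (the $k=0$ term of the convolution of $u^2$ with itself, the real-multiplier identity $(1-|k|^2)^2a_kb_k=[(1-|k|^2)a_k][(1-|k|^2)b_k]$, and Cauchy--Schwarz), while your Bohr-mean derivation of (a) is exactly the mean-value/torus-integral argument the paper itself records in Remark~\ref{rmk:iso} via the isometry with $L^2(\T^p)$, where the paper's formal proof instead shifts summation indices using $b_{-m}=b_m$. Your convergence bookkeeping in (c) (boundedness of $(1-|k|^2)a_k$ and absolute convergence of both series under the stated hypotheses) is sound and in fact slightly more careful than the paper, which asserts the termwise identity without comment.
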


\begin{proof}
Write
	$u(x)=\sum_ka_ke^{ik\cdot x}$,
	$v(x)=\sum_kb_ke^{ik\cdot x}$,
	$w(x)=\sum_kc_ke^{ik\cdot x}$.
Then
$vw=\sum_k\bigl(\sum_mb_mc_{k-m}\bigr)
e^{ik\cdot x}$, and hence
$\langle u,vw \rangle = 
\sum_k\bigl(a_k\sum_m b_m c_{k-m}\bigr)$.
Similarly,
$\langle uv,w \rangle = 
\sum_k\bigl(\sum_mb_ma_{k-m} c_k\bigr)$.
Changing $m$ to $-m$ and using that $b_{-m}= b_m$, we obtain
$\langle uv,w \rangle = 
\sum_k\bigl(\sum_m b_ma_{k+m} c_k\bigr)$.
Changing $k$ to $k-m$ yields~(a).

For~(b), note that
$\langle u,u\rangle^2=(\sum_m a_m^2)^2$
is the ``$k=0$'' term in the sum 
$\langle u^2,u^2\rangle =
\sum_k(\sum_m a_ma_{k-m})^2$.

Finally,
\begin{align*}
\langle (\Delta+1)^2u,v\rangle = 
\sum (-|k|^2+1)^2a_k\; b_k &=
\sum (-|k|^2+1)a_k\;(-|k|^2+1) b_k 
\\ & =
\langle (\Delta+1)u,(\Delta+1)v\rangle,
\end{align*}
proving (c).
\end{proof}

\begin{rmk} \label{rmk:iso}
	In Part~II, we define $\ell^2(\cL^*_H)$ as an appropriate completion.
Let $k_1,\dots,k_p$ be a minimal set of generators for $\cL^*_H$.
  There is an isometric isomorphism between $\ell^2(\cL^*_H)$ and the space $L^2(\T^p)$ given by
\[
\iota: \sum_{m\in\Z^p}a_{m_1,\dots,m_p}e^{i(m_1k_1+\dots +m_pk_p)\cdot x}
\longmapsto
 \sum_{m\in\Z^p}a_{m_1,\dots,m_p} 
e^{2\pi i(m_1\theta_1+\dots +m_p\theta_p)}.
\]
With this identification, part (a) of Proposition~\ref{prop:inner} is immediate
(writing $U=\iota(u)$ and so on, we get $\int_{\T^p} U(VW)=\int_{\T^p} (UV)W$) and (b) follows from the
Cauchy-Schwarz inequality ($(\int_{\T^p} U^2)^2=(\int_{\T^p} U^21)^2\le \int_{\T^p} U^4$).
However, we caution that the
action of $\Delta$ is not the standard one, so part (c) requires
an extra calculation.
\end{rmk}

\begin{pfof}{Theorem~\ref{thm:SH}(a,b)}
Define $N=\|u\|_2^2=\langle u,u\rangle$.
Using Proposition~\ref{prop:inner}, we compute that
\begin{align*}
\tfrac12 N_t & =\langle u,\partial_t u\rangle = \langle u,-(\Delta+1)^2u\rangle
+\langle u,\lambda u\rangle + \langle u,-u^3 \rangle \\
& = -\langle (\Delta+1)u,(\Delta+1)u\rangle
+\lambda \langle u,u\rangle - \langle u^2,u^2 \rangle \\
& \le \lambda   \langle u,u\rangle -  \langle u,u\rangle^2 =N(\lambda - N).
\end{align*}
In particular, 
$N_t\le 2\lambda N$, so 
$N(u(t))\le e^{2\lambda t}N(u(0))$.
Hence if $\lambda<0$, then 
all trajectories converge to $0$ exponentially quickly.

If $\lambda=0$, then $N_t\le -2N^2$ from which it follows that
$N(u(t))\le N(u(0))/(2N(u(0))t+1)$ and again all trajectories
converge to $0$.

If $\lambda>0$, then 
using
$N_t  \le 2N(\lambda - N)$,
we see that
$N$ decreases exponentially along trajectories
outside the ball of radius $(1+\eps)^2\lambda$ for all $\eps>0$.
\end{pfof}

Define the potential
\[
P_\lambda(u)=\tfrac12\|(\Delta+1)u\|_2^2-\tfrac12\lambda \|u\|_2^2
+\tfrac14 \|u^2\|_2^2.
\]

\begin{prop}  \label{prop:P}
Let $u(t)$ be a solution to the Swift-Hohenberg equation~\eqref{eq:SH} for some $\lambda\in\R$ and suppose that $u(0)\in \ell^1(\cL^*_H)$.
Then 
\[
\frac{d}{dt}P_\lambda(u(t))=-\|F(u(t),\lambda)\|_2^2
\quad\text{for all $t>0$.}
\]
\end{prop}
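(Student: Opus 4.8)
The plan is to recognise the identity as the statement that the Swift-Hohenberg flow is the $\ell^2$-gradient flow of $P_\lambda$, with $\ell^2$-gradient $-F(u,\lambda)=(\Delta+1)^2u-\lambda u+u^3$, and to verify it by differentiating $P_\lambda(u(t))$ term by term and applying the inner-product identities of Proposition~\ref{prop:inner}. Before any computation, I would fix $t>0$ and record the regularity available: since $u(0)\in\ell^1(\cL^*_H)$, the parabolic smoothing discussed in Section~\ref{sec:local} (for the sectorial operator $L=-(\Delta+1)^2$) guarantees that $u(t)$ is smooth for $t>0$, so that $(\Delta+1)^2u(t)$, $u_t(t)=F(u(t),\lambda)$, and all the products appearing below lie in $\ell^1(\cL^*_H)\subset\ell^2(\cL^*_H)$. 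This is precisely what makes every inner product well-defined and legitimises both the termwise differentiation of the series defining $P_\lambda(u(t))$ and the hypotheses of Proposition~\ref{prop:inner}.

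Next I would differentiate the three terms of $P_\lambda$ separately. For the gradient term, $\frac{d}{dt}\tfrac12\langle(\Delta+1)u,(\Delta+1)u\rangle=\langle(\Delta+1)u,(\Delta+1)u_t\rangle$, which by Proposition~\ref{prop:inner}(c) equals $\langle(\Delta+1)^2u,u_t\rangle$. The linear term contributes $-\lambda\langle u,u_t\rangle$ directly. For the quartic term, $\frac{d}{dt}\tfrac14\langle u^2,u^2\rangle=\langle u\,u_t,u^2\rangle$, and Proposition~\ref{prop:inner}(a) converts this to $\langle u_t,u\cdot u^2\rangle=\langle u^3,u_t\rangle$. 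Summing the three contributions gives $\frac{d}{dt}P_\lambda(u(t))=\langle(\Delta+1)^2u-\lambda u+u^3,\,u_t\rangle=-\langle F(u,\lambda),u_t\rangle$, and substituting $u_t=F(u,\lambda)$ yields $-\|F(u,\lambda)\|_2^2$, as required.

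The main obstacle is not the algebra but the analytic justification of the first step: ensuring that $t\mapsto P_\lambda(u(t))$ is genuinely differentiable with derivative obtained by differentiating under the summation sign, and that Proposition~\ref{prop:inner}(c) genuinely applies, which needs $(\Delta+1)^2u$ and $(\Delta+1)u_t$ in the stated spaces. I would discharge this by invoking the smoothing estimates for $L$ on $\ell^1(\cL^*_H)$: for $t>0$ these place $u(t)$ in the domain of arbitrary powers of $L$, so the series defining each term of $P_\lambda(u(t))$ and its formal $t$-derivative converge absolutely, uniformly on compact subintervals of $(0,\infty)$. This uniformity permits differentiation term by term and makes the chain-rule computation above exact for every $t>0$. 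Once the regularity is in place, the identity is an immediate consequence of parts (a) and (c) of Proposition~\ref{prop:inner} together with the definition of $F$.
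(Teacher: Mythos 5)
Your proof is correct and takes essentially the same approach as the paper: both differentiate $P_\lambda(u(t))$ term by term along the flow, use Proposition~\ref{prop:inner}(c) to convert $\langle(\Delta+1)u,(\Delta+1)u_t\rangle$ into $\langle(\Delta+1)^2u,u_t\rangle$ and Proposition~\ref{prop:inner}(a) to convert the quartic contribution into $\langle u^3,u_t\rangle$, and then substitute $u_t=F(u,\lambda)$ to get $-\|F(u(t),\lambda)\|_2^2$. Your extra paragraph on parabolic smoothing just spells out what the paper compresses into the phrase ``and the smoothing properties of the PDE.''
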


\begin{proof}  
Let $v=\partial_t u$.
By Proposition~\ref{prop:inner}(a,c) and the smoothing properties of the PDE,
\begin{align*}
(dP_\lambda)_{u}v & =\langle (\Delta+1)u,(\Delta+1)v\rangle -\lambda\langle u,v \rangle
+\langle u^2,uv\rangle
\\ & =\langle (\Delta+1)^2u,v\rangle -\lambda\langle u,v \rangle
+\langle u^3,v\rangle
=-\langle F(u,\lambda),v\rangle.
\end{align*}
It follows as usual that 
\(
\frac{d}{dt}P_\lambda(u(t))=(dP_\lambda)_{u(t)}\partial_tu(t)=-\langle F(u(t),\lambda),F(u(t),\lambda)\rangle.
\)
\end{proof}

\begin{rmk} \label{rmk:min}
(a)
When $\lambda\le0$, the potential function $P$ has a unique global minimum at $u=0$.  

\vspace{1ex}
\noindent (b)
By Proposition~\ref{prop:inner}(b),
$P_\lambda(u)\ge \frac14 \bigl(\|u\|_2^4-2 \lambda\|u\|_2^2 \bigr)$.
In particular, $P_\lambda(u)\ge -\frac14\lambda^2$ for all $u$,~$\lambda$.
\end{rmk}

\begin{pfof}{Theorem~\ref{thm:SH}(c)}
Choose $k_0\in\R^d$ with $|k_0|=1$ such that $|\{\gamma k_0:\gamma\in H\}|=|H|$.
(The case where $k_0$ is fixed by an element of $H$ is almost identical.)
For $\lambda\ge0$, consider the initial condition 
\[
u_\lambda(0)\in \ell^1(\cL^*_H), \qquad 
u_\lambda(0)(x)=a\sqrt\lambda \sum_{\gamma\in H} e^{i\gamma k_0\cdot x},
\]
where $a>0$.  Since this is a finite sum, the amplitudes certainly decay rapidly and hence $u_\lambda(t)\in\ell^1(\cL^*_H)$ for all $t\ge0$.

Clearly, 
$\|(\Delta+1)^2u_\lambda(0)\|_2=0$ and $\|u_\lambda(0)\|_2^2=|H| a^2\lambda$.
A rough estimate gives $a^4\lambda^2|H|^2\le \|u_\lambda(0)^2\|_2^2\le a^4\lambda^2|H|^4$.
(Note that $u_\lambda(0)^2$ is a sum of $|H|^2$ vectors in $\R^d$ (including multiplicities), so $\|u_\lambda(0)^2\|_2^2=a^4\lambda^2(M_1^2+\dots+M_p^2)$ where
$p$ is the number of distinct vectors in the sum and $M_1,\dots,M_p$ are the multiplicities. 
The extreme cases are $p=|H|^2$, $M_1=\dots=M_p=1$ and
$p=1$, $M_1=|H|^2$.) Hence we can write 
$\|u_\lambda(0)^2\|_2^2 = a^4\lambda^2|H|^2g_H$
where $1\le g_H\le |H|^2$.
Altogether,
\[
P_\lambda(u_\lambda(0))  =-\tfrac12 |H|a^2\lambda^2+\tfrac14 |H|^2g_Ha^4\lambda^2
= \tfrac14\lambda^2 |H|(|H|g_Ha^4-2a^2).
\]
This is minimised at
$a=(|H|g_H)^{-1/2}$, and for this choice of $a$ we obtain
\[
P_\lambda(u_\lambda(0))=-\tfrac{1}{4g_H}\lambda^2.
\]
By Proposition~\ref{prop:P},
\begin{equation} \label{eq:P}
P_\lambda(u_\lambda(t))\le -\tfrac{1}{4g_H}\lambda^2
\quad\text{for all $t\ge0$.}
\end{equation} 
Writing again $N=\|u_\lambda(t)\|_2^2$ and 
using Remark~\ref{rmk:min}(b),
\[
\tfrac14(N^2-2\lambda N)\le P_\lambda(u_\lambda(t))\le -\tfrac{1}{4g_H}\lambda^2.
\]
It follows that $(N-\lambda)^2\le (1-g_H^{-1})\lambda^2$
and hence
\[
N\ge C_H^2\lambda \quad\text{where} \quad
C_H^2= 1-\sqrt{1-g_H^{-1}}\in(0,1].
\]
This gives the lower bound in Theorem~\ref{thm:SH}(c).

Using~\eqref{eq:P} once more, 
\[
\|(\Delta+1)u_\lambda(t)\|_2^2-\lambda\|u_\lambda(t)\|_2^2
+\tfrac12\|u_\lambda(t)^2\|_2^2= 2P_\lambda(u_\lambda(t))\le -\tfrac{1}{2g_H}\lambda^2.
\]
By~(b),
\[
\|(\Delta+1)u_\lambda(t)\|_2^2 +\tfrac12\|u_\lambda(t)^2\|_2^2 \le \big((1+\eps)^2-\tfrac{1}{2g_H}\big)\lambda^2,
\]
so choosing $\eps$ small enough yields the upper bound in Theorem~\ref{thm:SH}(c).
\end{pfof}

\begin{rmk}
For $d=2$, the holohedries are $H=\D_q$ where $q$ is even. 
In this case, $|H|=2q$.
As before, 
$u_\lambda(0)^2$ is a sum of $|H|^2$ vectors in $\R^2$ (including multiplicities).
The vector $0$ occurs with multiplicity $|H|$ and there are 
$|H|$ distinct vectors of length two (each with multiplicity $1$). The remaining $|H|^2-2|H|$ vectors group into pairs
and comprise $\frac12(|H|^2-2|H|)$  distinct vectors of multiplicity $2$.
Hence 
\[
\|u_\lambda(0)^2\|_2^2
=a^4\lambda^2(1\cdot|H|^2+|H|\cdot 1^2+\tfrac12(|H|^2-2|H|)2^2)=(3|H|^2-3|H|)a^4\lambda^2.
\]
This gives $g_H=(3|H|-3)/|H|$, so
\[
C_H^2=1-\sqrt{\frac{2|H|-3}{3|H|-3}}
=1-\sqrt{\frac{4q-3}{6q-3}}.
\]
\end{rmk}

Now we turn to the proof of Theorem~\ref{thm:SH2}.

\begin{prop}\label{prop:E}
Let $0<\epsilon<1$, $\lambda\ge0$, $r>0$ such that $\lambda \le \eps r^2$. 
Suppose that $u,\,\Delta u\in\ell^1(\cL^*_H)$.
If $\|V_r^u\|_2^2 \leq (1-\eps) \|u\|_2^2$, then $P_{\lambda}(u)\geq 0$.
\end{prop}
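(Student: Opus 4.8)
The plan is to reduce everything to an elementary spectral estimate for the operator $\Delta+1$. Since $\lambda\ge0$, the only term in $P_\lambda(u)$ that can be negative is $-\tfrac12\lambda\|u\|_2^2$, while the cubic term $\tfrac14\|u^2\|_2^2$ is manifestly nonnegative. So it suffices to prove the single inequality $\|(\Delta+1)u\|_2^2\ge\lambda\|u\|_2^2$, from which $P_\lambda(u)\ge0$ follows immediately.

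First I would note that $\Delta+1$ acts diagonally on Fourier modes: writing $u=\sum_k a_k e^{ik\cdot x}$, we have $(\Delta+1)u=\sum_k(1-|k|^2)a_k e^{ik\cdot x}$, so that $\|(\Delta+1)u\|_2^2=\sum_k(|k|^2-1)^2|a_k|^2$, a finite quantity by the hypothesis $u,\,\Delta u\in\ell^2(\cL^*_H)$. The splitting $u=V_r^u+E_r^u$ partitions the frequencies $k\in\cL^*_H$ according to whether $||k|^2-1|<r$ or $||k|^2-1|\ge r$; since these index sets are disjoint, $V_r^u$ and $E_r^u$ are orthogonal in $\ell^2(\cL^*_H)$ and hence $\|u\|_2^2=\|V_r^u\|_2^2+\|E_r^u\|_2^2$.

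The key step is the observation that on the support of $E_r^u$ we have $(|k|^2-1)^2\ge r^2$, whence
\[
\|(\Delta+1)u\|_2^2\ge\sum_{k\in\cL^*_H,\,||k|^2-1|\ge r}(|k|^2-1)^2|a_k|^2\ge r^2\|E_r^u\|_2^2.
\]
The hypothesis $\|V_r^u\|_2^2\le(1-\eps)\|u\|_2^2$ combined with the orthogonal splitting gives $\|E_r^u\|_2^2=\|u\|_2^2-\|V_r^u\|_2^2\ge\eps\|u\|_2^2$. Substituting and using $\lambda\le\eps r^2$ yields $\|(\Delta+1)u\|_2^2\ge\eps r^2\|u\|_2^2\ge\lambda\|u\|_2^2$, which is exactly the required inequality; combining with $\tfrac14\|u^2\|_2^2\ge0$ gives $P_\lambda(u)\ge0$.

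There is no real obstacle here: the whole argument is a one-line spectral-gap estimate once one recognises that $E_r^u$ is supported precisely where $\Delta+1$ is bounded below by $r$ in modulus. The only points requiring a little care are the orthogonality of $V_r^u$ and $E_r^u$ in $\ell^2(\cL^*_H)$ (immediate from their disjoint frequency supports) and the finiteness of $\|(\Delta+1)u\|_2$, guaranteed by the assumption $\Delta u\in\ell^2(\cL^*_H)$. The roles of the three hypotheses are transparent: $\Delta u\in\ell^2$ makes the gradient term well-defined, the cut-off bound $\|V_r^u\|_2^2\le(1-\eps)\|u\|_2^2$ forces a definite fraction of the $\ell^2$ mass into the high-frequency region, and the balance condition $\lambda\le\eps r^2$ ensures the resulting spectral bound dominates the destabilising $\lambda$ term.
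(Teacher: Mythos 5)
Your proof is correct and follows essentially the same route as the paper: both split $u=V_r^u+E_r^u$ by frequency, use orthogonality and the hypothesis to get $\|E_r^u\|_2^2\ge\eps\|u\|_2^2$, apply the spectral bound $\|(\Delta+1)E_r^u\|_2^2\ge r^2\|E_r^u\|_2^2$, and close with $\lambda\le\eps r^2$. Your packaging of the conclusion as the single inequality $\|(\Delta+1)u\|_2^2\ge\lambda\|u\|_2^2$ is a slightly cleaner presentation of the identical estimate, so there is nothing to add.
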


\begin{proof}
Note that 
\(
\langle V_r^u,E_r^u\rangle=
\langle \Delta V_r^u,\Delta E_r^u\rangle=0,
\)
and so 
\[
\|u\|_2^2=\|V_r^u\|_2^2 + \|E_r^u\|_2^2, \qquad \|(1+\Delta)u\|_2^2 = \|(1+\Delta)V_r^u\|^2_2 + \|(1+\Delta)E_r^u\|^2_2.
\]
In particular, the assumption $\|V_r^u\|_2^2\leq (1-\eps)\|u\|_2^2$ is equivalent to 
$\|E_r^u\|_2^2 \geq \eps \|u\|_2^2$. This together with the 
assumption $\lambda \le \eps r^2$ implies that 
\[
 (r^2-\lambda) \| E_r^u\|^2_2 - \lambda \|V_r^u\|^2_2 
 =  r^2 \| E_r^u\|^2_2 - \lambda \|u\|_2^2    \geq (\eps r^2 -\lambda )\|u\|_2^2  \geq 0 .
\]
In addition,
\[
\|(\Delta+1)E_r^u\|^2_2 =\sum_{k\in\cL^*_H,\,||k|^2-1|>r} (|k|^2-1)^2a_k^2 \geq r^2 \|E_r^u\|_2^2.
\]
Hence
\begin{align*}
2P_\lambda(u) \geq&  \|(\Delta+1)(V_r^u+E_r^u)\|_2^2 
- \lambda \|V_r^u+E_r^u\|_2^2\\
\geq & 
\|(\Delta+1) E_r^u \|_2^2  - \lambda \|V_r^u\|_2^2 - \lambda \|E_r^u\|_2^2 
\geq   (r^2-\lambda)\|E_r^u\|_2^2 -  \lambda \|V_r^u\|_2^2 \geq 0 . 
 \end{align*}
 This proves the result.
\end{proof}

\begin{pfof}{Theorem~\ref{thm:SH2}}
Let $t\ge0$.
By~\eqref{eq:P}, $P_\lambda(u_\lambda(t))<0$.
Hence the result follows from Proposition~\ref{prop:E}.
\end{pfof}

\section{Reaction diffusion systems and Turing instabilities}
\label{sec:Bruss}

Steady-state bifurcations with nonzero critical wavenumber akin to that in the Swift-Hohenberg equation \eqref{eq:SH} frequently occur in reaction diffusion systems, where these are usually referred to as Turing instabilities \cite{Tur}. Prototypical are two component systems, and a prominent example is the so-called Brusselator \cite{PL68}. Its simple standard form is given by the equations
\[
\begin{aligned}\label{e:Brusselator}
u_t &= d_1 \Delta u + A - (B+1)u+ u^2 v,\\
v_t &= d_2\Delta v +Bu - u^2v,
\end{aligned}
\]
where $(u,v):\R^d\to\R^2$ and
$d_1, d_2, A,B$ are positive parameters. 
The system has been studied broadly from a physical viewpoint, in particular deriving conditions for instability and bifurcations of simple patterns, e.g.\ \cite{Verdasca}. 

There is a trivial spatially constant solution $(u,v)\equiv (A,B/A)$, which plays the role of the zero state in the Swift-Hohenberg equation. Let $\eta=\sqrt{d_1/d_2}$. We assume that $1<\eta<((1+A^2)-1)/A$. 
Fix $A>0$ and let $\lambda_\mathrm{B}=B-(1+A\eta)^2$.

It is readily computed that the trivial solution undergoes a Turing instability as $\lambda_\mathrm{B}$ passes through $0$. 
For $\lambda_\mathrm{B}<0$ the trivial solution is linearly stable,
for $\lambda_\mathrm{B}>0$ it is linearly unstable,
 and for $\lambda_\mathrm{B}=0$
the kernel of the linearised equations at $(u,v)\equiv (A,B/A)$
consists of Fourier-modes $(u_0,v_0)e^{i k\cdot x}$ for $|k| = \sqrt{A/\eta}$, where $(u_0,v_0)$ is a multiple of 
$(-A,A\eta^2+\eta)$. 
Hence, there is a steady-state bifurcation with Euclidean symmetry and nonzero critical wavenumber. 

It is easily seen that this PDE defines a local dynamical system in
$C_b(\R^d)$ and in the $\ell^1$ spaces as in Section~\ref{sec:local}.
Hence the trivial solution $(u,v)\equiv (A,B/A)$ is locally asymptotically stable for $\lambda<0$ and unstable for $\lambda>0$, and
we obtain local existence of quasicrystals by the discussion in Section~\ref{sec:local}.
Moreover, global existence of solutions in $C_b(\R^d)$, and hence in each $AP(\cL^*_H)$, is proved in Part~II so we obtain quasicrystal solutions $u(t)$ defined for all $t$.

However, there are notable differences to the Swift-Hohenberg equation \eqref{eq:SH}. First, the Brusselator has no gradient structure and e.g.\ admits time periodic spatially periodic solutions \cite{Verdasca}. Second, while bifurcations in \eqref{eq:SH} are supercritical, this is generally not the case in the Brusselator (analogous to the variant of \eqref{eq:SH} with an additional quadratic term). 
Certain spatially periodic steady-state solutions (such as hexagons when $d=2$) exist for both $\lambda_\mathrm{B}>0$ and $\lambda_\mathrm{B}<0$ small. In particular, $(u,v)\equiv (A,B/A)$ is not globally asymptotically stable for $\lambda_\mathrm{B}< 0$, so the analogue of Theorem~\ref{thm:SH}(a) fails.
We have not pursued $\ell^2$-estimates like those in Theorem~\ref{thm:SH}(b,c,d).

\end{document}